  \newcommand{\tr}{\operatorname{Tr}}
\newcommand{\be}{\begin{equation}}
\newcommand{\ee}{\end{equation}}
\newcommand{\BC}{\mathbb{C}}
\newcommand{\BR}{\mathbb{R}}
\def\vgap{\vskip 10pt}
\newtheorem{remark}{Remark}
\newtheorem{lemma}{Lemma}
\newtheorem{proposition}{Proposition}
\newcommand{\eref}[1]{Eq.~(\ref{#1})}
\def\vgap{\vskip 10pt}
\newcommand{\footer}[1]{{\def\thefootnote{}\footnotetext{#1}}}
\begin{document}
\hfill \today
\begin{center}
{\bf A Comment on ``{\bf\em On the Rotation Matrix in Minkowski Space-time}'' by \"Ozdemir and Erdo\u{g}du}\vspace{10 pt}

 \renewcommand{\thefootnote}{\fnsymbol{footnote}}
Arkadiusz Jadczyk\footnote[2]{Quantum Future Group, Inc., PO BOX 252, Almond, NC 28702, United States: {\tt ajadczyk@physics.org}} and
 \renewcommand{\thefootnote}{\fnsymbol{footnote}}
Jerzy Szulga\footnote[3]{Department of Mathematics and Statistics, Auburn University, Alabama 36849; email: {\tt szulgje@auburn.edu}}
\footer{\hspace{-20pt} \rule{2.1in}{0.5pt}}
\footer{{\bf 2013 AMS Subject Classification}: 15A90, 15A57, 17B81, also 22E70}
\footer{{\bf Key words and phrases}: {\em generalized Euler-Rodrigues formula, Minkowski space, Lorentz group,  $\mathrm{SL(2,\BC)}$, $\mathrm{SO_0(3,1)}$}}
\footer{\hspace{-20pt} \rule{2.1in}{0.5pt}}
\footer{~}
\vgap

\end{center}
 \renewcommand{\thefootnote}{\arabic{footnote}}
 \vgap

\begin{quote}
{\bf Abstract}. {\small We comment on the article  by M. \"Ozdemir and M. Erdo\u{g}du \cite{ozdemir2014}. We indicate that the exponential map onto the Lorentz group can be obtained in two elementary ways. The first way utilizes a commutative algebra involving a conjugate of a semi-skew-symmetric matrix, and the second way is based on the classical epimorphism from $\mathrm{SL(2,\BC)}$ onto $\mathrm{SO_0(3,1)}$.}
\end{quote}
\vgap\vgap

\section{Introduction\label{sec:intro}}
The classical Euler-Rodrigues formula gives the explicit form of the rotation matrix in $\BR^3$ in terms of the rotation axis and the rotation angle (see e.g \cite{palais2007} for a pedagogical introduction by B. Palais and R. Palais). It can be also interpreted as the explicit formula for the exponential of a $3\times 3$ skew-symmetric matrix. Various generalizations to other dimensions have been studied (see eg. \cite{bernstein1993}. In physics  a generalization to the Minkowski space especially matters. This problem has been mentioned by J. Gallier in his lecture notes on Lie groups \cite{gallier2005}, sending the reader to the Ph. D. thesis of C. M. Geyer \cite{geyer}. Geyer indeed has provided such a derivation but not quite optimal nor complete. In a recent paper E. Minguzzi \cite{minguzzi2013}  classified standard forms of generators and provided the formula for each class separately\footnote{An  introduction to such classification can be found in \cite{hall2004,naber2012}}. The present note emerged as an alternative method to one proposed recently by  M. Ozdemir and M. Erdo\u{g}du \cite{ozdemir2014}. Our derivation is based on simple algebraic properties of the algebra constructed from the generator and its dual.  The singular case of the generator with the quadruple null eigenvalues  is also discussed in our note.
\section{Minkowski space generalization of\\ Euler-Rodrigues
formula\label{sec:gen}}\index{Minkowski!space}
Let $F$ be a generator of a one--parameter subgroup of the Lorentz group. We write $F$ in the following general form that resembles the form of the electromagnetic field mixed tensor expressed ${F^\mu}_\nu$ in terms of the electric and magnetic field\index{magnetic field} vectors $\mathbf{e}$ and $\mathbf{b} :$
\be F= \left(\begin{smallmatrix}
0 & b_3 & -b_2 & e_1 \\
 -b_3 & 0 & b_1 & e_2 \\
 b_2 & -b_1 & 0 & e_3 \\
 e_1 & e_2 & e_3 & 0
\label{eq:F}\end{smallmatrix}\right).
\ee
The ``dual'' matrix, denoted by $\tilde{F}$, is obtained from $F$ by a ``dual rotation", that is, by replacing $\mathbf{e}\rightarrow \mathbf{b},\, \mathbf{b}\rightarrow -\mathbf{e} :$
\be \tilde{F}= \left(\begin{smallmatrix}
0 & -e_3 & e_2 & b_1 \\
 e_3 & 0 & -e_1 & b_2 \\
-e_2 & e_1 & 0 & b_3 \\
 b_1 & b_2 & b_3 & 0
\label{eq:Fstar}\end{smallmatrix}\right)
\ee
We also introduce real numbers $u$ and $v$ defined as
\be u=\frac{1}{4}\tr F\tilde{F}=\mathbf{e}\cdot\mathbf{b},\quad v=\frac{1}{4}\tr\,F^2=\frac{1}{2}(\mathbf{e}^2-\mathbf{b}^2).
\label{eq:uv}\ee
The characteristic polynomials for $F$ and $\tilde{F}$ can now be expressed in terms of $u$ and $v$
\be \det (F-\lambda I)=\lambda^4-2v\lambda^2-u^2,\quad \det (\tilde{F}-\lambda I)=\lambda^4+2v\lambda^2-u^2.\ee
Let $\sigma$ and $\theta$ be defined as
\be \sigma=\sqrt{\sqrt{u^2+v^2}+v},\quad \theta=\mbox{sgn}(u)\sqrt{\sqrt{u^2+v^2}-v}.\label{eq:theta}\ee
It is clear that $\sigma$ is nonnegative and $\theta$ has the sign of $u,$ where the $\mbox{sgn}$ is defined to be right continuous, that is $\mbox{sgn}(0)=1.$ The eigenvalues of $F$ are $\pm\sigma$ and $\pm i\theta$ while the eigenvalues of $\tilde{F}$ are $\pm \theta$ and $\pm i\sigma.$
The following identities follow directly from the definitions:
\be v=\frac{\sigma^2-\theta^2}{2},\quad u= \sigma\theta.\label{eq:uvab}\ee
Let us define $T$ as follows:
\be F^2+\tilde{F}^2=T.\label{eq:T}\ee
\begin{lemma}
The matrices $F,\tilde{F},T$ commute. Moreover, the following identities hold:
\be F\tilde{F}=\tilde{F} F=\sigma\theta\,I\,\label{eq:i1}\ee
\be F^2-\tilde{F}^2=(\sigma^2-\theta^2)\,I\label{eq:i2}\ee
\be F^3 =(\sigma^2-\theta^2)F+\sigma\theta\tilde{F},\label{eq:i3}\ee
\be F^2=\frac{T+(\sigma^2-\theta^2)I}{2},\label{eq:i4}\ee
\be\tilde{F}^2=\frac{T-(\sigma^2-\theta^2)I}{2},\label{eq:i5}\ee
\be  FT=(\sigma^2-\theta^2)F+2\sigma\theta\tilde{F},\label{eq:i6}\ee
\be \tilde{F} T=2\sigma\theta F-(\sigma^2-\theta^2)\tilde{F},\label{eq:i7}\ee
\be T^2=(\sigma^2+\theta^2)^2I.\label{eq:i8}\ee
\label{lem:lem1}
\end{lemma}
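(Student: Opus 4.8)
The plan is to reduce the whole lemma to the two ``fundamental'' identities \eqref{eq:i1} and \eqref{eq:i2} and then to obtain everything else by formal algebra. Before touching matrices I would record the scalar relations that come out of the definition \eqref{eq:theta}, namely $\sigma^{2}-\theta^{2}=2v$, $\sigma\theta=u$ and $\sigma^{2}+\theta^{2}=2\sqrt{u^{2}+v^{2}}$; the first two are \eqref{eq:uvab} and the third is an immediate addition, and after this no further use of the explicit form of $\sigma,\theta$ is needed — only $u=\sigma\theta$ and $2v=\sigma^{2}-\theta^{2}$.

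The one genuine computation is the verification of \eqref{eq:i1} and \eqref{eq:i2}, and I would organize it through the $2\times2$ block decomposition $F=\left(\begin{smallmatrix}B&\mathbf e\\ \mathbf e^{T}&0\end{smallmatrix}\right)$, $\tilde F=\left(\begin{smallmatrix}E&\mathbf b\\ \mathbf b^{T}&0\end{smallmatrix}\right)$, where $B$ and $E$ are the $3\times3$ skew blocks read off from \eqref{eq:F}--\eqref{eq:Fstar} (so that $B\mathbf x=-\mathbf b\times\mathbf x$, $E\mathbf x=\mathbf e\times\mathbf x$). Multiplying the blocks and using the elementary vector identities $\mathbf a\times\mathbf a=0$ and $[\mathbf a]_{\times}[\mathbf c]_{\times}=\mathbf c\,\mathbf a^{T}-(\mathbf a\cdot\mathbf c)I_{3}$, one sees that the off-diagonal blocks of $F\tilde F$ and of $\tilde F F$ vanish while both diagonal blocks collapse to $(\mathbf e\cdot\mathbf b)$, i.e. $F\tilde F=\tilde F F=(\mathbf e\cdot\mathbf b)I_{4}=\sigma\theta\,I$; the same block multiplication applied to $F^{2}$ and $\tilde F^{2}$ gives $F^{2}-\tilde F^{2}=(\mathbf e^{2}-\mathbf b^{2})I_{4}=(\sigma^{2}-\theta^{2})I$. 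In particular this already establishes $F\tilde F=\tilde F F$, and since $T=F^{2}+\tilde F^{2}$ is a polynomial in the commuting matrices $F$ and $\tilde F$, the set $\{I,F,\tilde F,T\}$ is commutative, which is the first assertion.

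The remaining identities then follow by bookkeeping. Adding and subtracting \eqref{eq:T} and \eqref{eq:i2} gives \eqref{eq:i4} and \eqref{eq:i5}. For \eqref{eq:i3} I would write $F^{3}=F\bigl(\tilde F^{2}+(\sigma^{2}-\theta^{2})I\bigr)=(F\tilde F)\tilde F+(\sigma^{2}-\theta^{2})F=\sigma\theta\,\tilde F+(\sigma^{2}-\theta^{2})F$, using \eqref{eq:i2} then \eqref{eq:i1}. For \eqref{eq:i6}, $FT=F^{3}+F\tilde F^{2}=F^{3}+\sigma\theta\,\tilde F$, and substitute \eqref{eq:i3}; for \eqref{eq:i7}, $\tilde F T=\tilde F F^{2}+\tilde F^{3}$ with $\tilde F F^{2}=(\tilde F F)F=\sigma\theta\,F$ and $\tilde F^{3}=\tilde F\bigl(F^{2}-(\sigma^{2}-\theta^{2})I\bigr)=\sigma\theta\,F-(\sigma^{2}-\theta^{2})\tilde F$. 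Finally for \eqref{eq:i8}, $T^{2}=F^{4}+2F^{2}\tilde F^{2}+\tilde F^{4}$ with $F^{2}\tilde F^{2}=(F\tilde F)^{2}=\sigma^{2}\theta^{2}I$ and $F^{4}+\tilde F^{4}=(F^{2}-\tilde F^{2})^{2}+2F^{2}\tilde F^{2}=\bigl((\sigma^{2}-\theta^{2})^{2}+2\sigma^{2}\theta^{2}\bigr)I$, so that $T^{2}=\bigl((\sigma^{2}-\theta^{2})^{2}+4\sigma^{2}\theta^{2}\bigr)I=(\sigma^{2}+\theta^{2})^{2}I$.

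The main obstacle is really only the block matrix multiplication behind \eqref{eq:i1}--\eqref{eq:i2}: it is where the precise sign pattern in the definition \eqref{eq:Fstar} of $\tilde F$ earns its keep, making both cross-terms cancel and the diagonal blocks coincide. I would also take care to check the convention-dependent signs so that the computation produces $F\tilde F=+\sigma\theta\,I$ (and not $-\sigma\theta\,I$), which is consistent with the eigenvalue assignment stated just before the lemma; once those two identities are in place, the rest is mechanical.
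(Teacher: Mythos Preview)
Your proposal is correct and follows essentially the same approach as the paper: establish \eqref{eq:i1}--\eqref{eq:i2} by direct matrix computation (you supply the explicit $2\times2$ block argument where the paper simply says ``routine matrix algebra from $\sigma\theta=\mathbf e\cdot\mathbf b$ and $\sigma^{2}-\theta^{2}=\mathbf e^{2}-\mathbf b^{2}$''), and then obtain \eqref{eq:i3}--\eqref{eq:i8} by purely formal manipulations using \eqref{eq:T}, \eqref{eq:i1}, \eqref{eq:i2}. The only cosmetic differences are in the bookkeeping for the later identities (e.g.\ the paper gets \eqref{eq:i8} by multiplying \eqref{eq:i6} by $F$ rather than expanding $T^{2}$ directly), but the substance is the same.
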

\begin{proof}
Equations (\ref{eq:i1}-\ref{eq:i3}) follow by a routine matrix algebra  from the identities $\sigma\theta=\mathbf{e}\cdot\mathbf{b}$ and $\sigma^2-\theta^2=\mathbf{e}^2-\mathbf{b}^2.$
\eref{eq:i4} (resp. (\ref{eq:i5})) follows by adding (resp. subtracting)  \eref{eq:T} and \eref{eq:i2}. \eref{eq:i6} results in a smilar way.
In order to show \eref{eq:i7} we first multiply \eref{eq:i2} by $\tilde{F}$ and use \eref{eq:i1}.
Finally, \eref{eq:i8} can be derived from \eref{eq:i6} multiplied by $F$, and by using (\ref{eq:i1}) and (\ref{eq:i2}).
\end{proof}
\begin{proposition}[Generalized Euler-Rodrigues formula]\label{app:propexp}
\index{Euler-Rodrigues formula}Assume that $\sigma^2+\theta^2 > 0.$ Then the following general formula holds:
\begin{align}
\exp(Ft)&=\frac{\cosh(t\sigma)+\cos(t\theta)}{2}\,I
+\frac{\sigma\sinh(t\sigma)+\theta\sin(t\theta)}{\sigma^2+\theta^2}\,F&\notag\\
&+\frac{\theta\sinh(t\sigma)-\sigma\sin(t\theta)}{\sigma^2+\theta^2}\tilde{F}
+\frac{\cosh(t\sigma)-\cos(t\theta)}{2(\sigma^2+\theta^2)}\,T.\label{eq:expft}
\end{align}
Equivalently, using $F^2$ instead of $T$
\begin{align}
\exp(Ft)&=\frac{\theta^2\cosh(t\sigma)+\sigma^2\cos(t\theta)}{\sigma^2+\theta^2}\,I
+\frac{\sigma\sinh(t\sigma)+\theta\sin(t\theta)}{\sigma^2+\theta^2}\,F&\notag\\
&+\frac{\theta\sinh(t\sigma)-\sigma\sin(t\theta)}{\sigma^2+\theta^2}\tilde{F}
+\frac{\cosh(t\sigma)-\cos(t\theta)}{\sigma^2+\theta^2}\,F^2.\label{eq:expft2}
\end{align}
If $\sigma=\theta=0,$ then
\be \exp(tF)=I+tF+\frac{t^2}{4}T=I+tF+\frac{t^2}{2}F^2.\label{eq:a0b0}\ee
\end{proposition}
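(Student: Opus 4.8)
The plan is to linearise the exponential by passing to the complex combinations $Z=F+i\tilde{F}$ and $\bar{Z}=F-i\tilde{F}$ and writing $\lambda=\sigma+i\theta$. Since $F$ and $\tilde F$ commute, \eref{eq:i1} and \eref{eq:i2} give immediately
\[
Z^{2}=(F^{2}-\tilde F^{2})+2iF\tilde F=(\sigma^{2}-\theta^{2}+2i\sigma\theta)I=\lambda^{2}I,\qquad
\bar Z^{2}=\bar\lambda^{2}I,\qquad Z\bar Z=F^{2}+\tilde F^{2}=T .
\]
The hypothesis $\sigma^{2}+\theta^{2}>0$ enters here precisely as $\lambda\neq0$.

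Because $Z^{2}$ is a scalar matrix, the (absolutely convergent) exponential series for $tZ/2$ splits into its even and odd parts and collapses to
\[
\exp\!\bigl(\tfrac{t}{2}Z\bigr)=\cosh\!\bigl(\tfrac{t\lambda}{2}\bigr)\,I+\frac{\sinh(t\lambda/2)}{\lambda}\,Z ,
\]
and similarly for $\bar Z$ with $\lambda$ replaced by $\bar\lambda$ (equivalently one may use the idempotents $P_{\pm}=\tfrac12(I\pm\lambda^{-1}Z)$). As $F=(Z+\bar Z)/2$ and $Z,\bar Z$ commute, $\exp(tF)=\exp(tZ/2)\exp(t\bar Z/2)$; multiplying the two factors expresses $\exp(tF)$ as a combination of $I$, $Z$, $\bar Z$ and $Z\bar Z=T$, with coefficients $\cosh(t\lambda/2)\cosh(t\bar\lambda/2)$, $\lambda^{-1}\sinh(t\lambda/2)\cosh(t\bar\lambda/2)$, its complex conjugate, and $|\lambda|^{-2}\sinh(t\lambda/2)\sinh(t\bar\lambda/2)$, respectively.

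The remaining step is purely a matter of bookkeeping. Using the product-to-sum formulas together with $\tfrac{t\lambda}{2}+\tfrac{t\bar\lambda}{2}=t\sigma$ and $\tfrac{t\lambda}{2}-\tfrac{t\bar\lambda}{2}=it\theta$ one gets $\cosh(t\lambda/2)\cosh(t\bar\lambda/2)=\tfrac12(\cosh t\sigma+\cos t\theta)$, $\sinh(t\lambda/2)\sinh(t\bar\lambda/2)=\tfrac12(\cosh t\sigma-\cos t\theta)$ and $\sinh(t\lambda/2)\cosh(t\bar\lambda/2)=\tfrac12(\sinh t\sigma+i\sin t\theta)$. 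Dividing the last expression by $\lambda=\sigma+i\theta$, and recalling $Z=F+i\tilde F$ so that the $Z$- and $\bar Z$-terms combine into a real multiple of $F$ plus a real multiple of $\tilde F$, the real and imaginary parts reproduce exactly the coefficients of $F$ and $\tilde F$ in \eref{eq:expft}; the $I$- and $T$-coefficients are already in the required form. Finally \eref{eq:expft2} is obtained by substituting $T=2F^{2}-(\sigma^{2}-\theta^{2})I$, which is \eref{eq:i4}, and collecting terms. I expect this last real/imaginary separation to be the only delicate point; there is no conceptual obstacle once the factorisation $\exp(tF)=\exp(tZ/2)\exp(t\bar Z/2)$ is available.

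For the degenerate case $\sigma=\theta=0$ the identity \eref{eq:i3} reduces to $F^{3}=0$, so the exponential series terminates after the quadratic term, $\exp(tF)=I+tF+\tfrac{t^{2}}{2}F^{2}$; since \eref{eq:i4} now reads $F^{2}=T/2$, this equals $I+tF+\tfrac{t^{2}}{4}T$, which is \eref{eq:a0b0}. (The same limit can be recovered by letting $\sigma^{2}+\theta^{2}\to0$ in \eref{eq:expft2} and expanding, a useful consistency check.)
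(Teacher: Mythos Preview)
Your argument is correct, but it follows a genuinely different route from the paper's own proof. The paper does not derive the formula; it \emph{verifies} it. Denoting the right-hand side of \eref{eq:expft} by $L(t)$, the paper checks $L(0)=I$ and $L'(0)=F$, and then establishes the semigroup law $L(t+s)=L(t)L(s)$ by brute-force expansion of the addition formulas for $\sin$, $\cos$, $\sinh$, $\cosh$ together with the reduction rules of Lemma~\ref{lem:lem1}; a standard theorem on one-parameter matrix groups then forces $L(t)=\exp(tF)$. The degenerate case $\sigma=\theta=0$ is handled the same way, by checking the group property of $I+tF+\tfrac{t^{2}}{4}T$ using $FT=T^{2}=0$.

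Your approach is constructive rather than verificational: the complexification $Z=F+i\tilde F$ with $Z^{2}=\lambda^{2}I$ lets you compute $\exp(tZ/2)$ in closed form and then factor $\exp(tF)=\exp(tZ/2)\exp(t\bar Z/2)$, so the coefficients in \eref{eq:expft} are \emph{produced} rather than guessed. This is in fact the $4\times4$ shadow of the paper's alternative $\mathrm{SL}(2,\BC)$ derivation in the following subsection, where the $2\times2$ generator $f$ satisfies $f^{2}=\omega^{2}I$ with $\omega=\tfrac12(\sigma+i\theta)=\tfrac{\lambda}{2}$ and one gets $e^{tf}=\cosh(\omega t)I+\omega^{-1}\sinh(\omega t)f$; your $Z$ plays the role of $2f$ directly inside $\mathrm{so}(3,1)_{\BC}$, bypassing the spinor map \eref{app:lmunu}. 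The trade-off: the paper's main proof needs no complexification and works entirely within the real commutative algebra generated by $F,\tilde F$, while your argument explains \emph{why} the answer involves $\cosh(t\sigma)\pm\cos(t\theta)$ and makes the passage to \eref{eq:expft2} and the degenerate limit transparent.
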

\begin{proof}
In the proof we use the following theorem about generators of one--parameter matrix groups: If $\gamma(t)$ is a one-parameter group of matrices, then $\gamma(t)=\exp(Xt),$ where $X=\gamma'(0).$ \footnote{The proof of this classical theorem can be found, for instance, in `An introduction to matrix groups and their applications` by Andrew Baker, Springer 2002, Theorem 2.17, also available online, the same title and author, Theorem 2.5: \url{http://www.maths.gla.ac.uk/~ajb/dvi-ps/lie-bern.pdf}}
We consider first the case of at least one of the numbers $\sigma,\theta$ being nonzero, i.e. $\sigma^2+\theta^2>0.$

Let $L(t)$ denote the right hand side of \eref{eq:expft}. Immediately, $L(0)=I, L'(0)=F.$ We aim to show that $L(t)$ is a one-parameter matrix group, i.e., that
\be L(t+s)=L(t)L(s).\label{eq:gp}\ee
The proof is somewhat tedious but straightforward. We write $L(t+s)$ and expand the functions $\sin(x+y),$ $\cos(x+y),$ $\sinh(x+y),$ $\cosh(x+y)$ in terms of products of functions of the corresponding arguments $x,y$ (i.e., products of $t,s$ and $\sigma,\theta$).  This way we get a long expression with coefficients at the matrices $I,F,\tilde{F},T.$

On the other hand, we multiply $L(t)L(s)$ and obtain coefficients in front of the products of $I,F,\tilde{F},T.$ All of these products can be reduced to $I,F,\tilde{F},T$ using Lemma \ref{lem:lem1}. Comparing the coefficients in front $I,F,\tilde{F},T$ establishes the result.

Suppose now that $\sigma=\theta=0.$ Then, from Lemma \ref{lem:lem1}, we have that
\be FT=T^2=0,F^2=T/2\ee The group property of $L(t)$ given by \eref{eq:gp} follows then by the following observation:
\begin{align}
L(t)L(s)&=(I+tF+\frac{t^2}{4}T) (I+sF+\frac{s^2}{4}T)&\notag\\
&=I+sF+\frac{s^2}{4}T+tF+\frac{ts}{2}T+\frac{t^2}{4}T&\\
&=I+(s+t)F+\frac{1}{4}(s+t)^2T.
\notag\end{align}
On the other hand $L(0)=I,L'(0)=F,$ which completes the proof. \eref{eq:expft2} follows from \eref{eq:expft} and \eref{eq:i4}.
\end{proof}
\begin{remark}
Since we are dealing with commuting matrices, the problem reduces to a simple commutative symbolic algebra. It can be handled more efficiently by an adequate software, capable of commutative symbolic operations.
\end{remark}
\subsection{Alternative derivation via $\mathrm{SL(2,\BC)}$}
With the four Hermitian matrices\footnote{By abuse of notation $\sigma_\mu$ and $\sigma^\mu$ constitute exactly the same set matrices. Their components are $\sigma_{\mu AB}$ and $\sigma^{\mu AB},$ $(\mu=1,...,4)$, $(A,B=1,2).$}\index{hermitian:matrix} $\sigma_\mu=\sigma^\mu,$
\be \sigma_1=\left(\begin{smallmatrix}0&1\\1&0\end{smallmatrix}\right),\,
\sigma_2=\left(\begin{smallmatrix}0&i\\-i&0\end{smallmatrix}\right),\,
\sigma_3=\left(\begin{smallmatrix}0&0\\0&-1\end{smallmatrix}\right),\,
\sigma_4=\left(\begin{smallmatrix}1&0\\0&1\end{smallmatrix}\right),\ee
\index{spin!matrices} the group homomorphism $A\mapsto \Lambda(A)$ from the group of unimodular matrices $\mathrm{SL(2\BC)}$  onto the connected component of the identity $\mathrm{SO(3,1)}_0$ of the homogenous Lorentz group\index{Lorentz group!restricted} is given by
\be {\Lambda(A)^\mu}_\nu=\frac{1}{2}\tr (A\sigma^\mu A^\dagger\sigma_\nu),\,(\mu,\nu=1,...,4).\label{app:lmunu}\ee
The completeness relations for $\sigma$ matrices
\be \sum_{\mu=1}^4\sigma^{\mu AB}\sigma_{\mu CD}=2\delta^{A}_D\delta^{B}_C,\, (A,B,C,D=1,2)\ee
entail \be\tr(\Lambda(A))=|\tr(A)|^2.\ee
Taking the derivative of \eref{app:lmunu} we arrive at the linear relation (isomorphism) between infinitesimal generators $f$ (traceless $2\times 2$ complex matrices) from the Lie algebra $\mathrm{sl(2,\BC)}$ to the Lie algebra of elements $F$ in $\mathrm{so(3,1)}$:
\be {F^\mu}_\nu=\frac{1}{2}\tr(f\sigma^\mu\sigma_\nu+\sigma^\mu f^\dagger\sigma_\nu).\ee
With $f$ defined by \be f\stackrel{\text{def}}{=}\frac{1}{2}\sum_{i=i}^3(e_i+ib_i)\sigma_i,\ee
we arrive at $F$ given by (\ref{eq:F}), while $\tilde{f}\stackrel{\text{def}}{=}-if$ gives $\tilde{F}.$ The characteristic polynomial for $f$,
$\det(f-\lambda\,I)=\lambda^2-\frac{1}{2}(v+iu),$
entails two roots $\pm \omega.$ There is a simple relation between $\omega$ and $\sigma,\theta$:
$\omega=\frac{1}{2}(\sigma+i\theta).$
Every $2\times 2$ complex matrix $X$ determines a vector in the complex Minkowski space with complex coordinates $x^\mu=\tr(\sigma^\mu X)/2.$ There are two scalar products in this space: $(x,y)=x^TJy$ and $\{x,y\}=x^\dagger Jy.$ The first one is bilinear, while the second one is Hermitian. Both are $SO(3,1)$ invariant. $X$ is Hermitian if and only if $x^\mu$ are real, moreover $\tr(X^\dagger\epsilon Y\epsilon)/2=\{x,y\}$ and $\tr(X^T\epsilon X\epsilon)/2=\det(X)=(x,y),$ where $\epsilon=\left(\begin{smallmatrix}0&1\\-1&0\end{smallmatrix}\right).$ If $\xi_\pm$ are eigenvectors\index{eigenvector} of $f$ belonging to eigenvalues $\pm\omega\neq 0,$ and if $X_\pm=\xi_\pm\otimes\xi_\pm^\dagger$, then $x_\pm$ are real isotropic (i.e. $(x,y)=\{x,y\}=0$) eigenvectors of $F$ corresponding to real eigenvalues $\pm 2\Re(\omega)$. Vectors $y_\pm$ corresponding to $\xi_+\otimes\xi_-^\dagger$ and $\xi_-\otimes\xi_+^\dagger$ are Hermitian space-like (we have $\{y,y\}=2(||\xi_+||^2||\xi_-||^2-|\xi_+^\dagger\xi_-|^2)>0$), bilinear isotropic (i.e. $(y_\pm,y_\pm)=0),$ and $J$-orthogonal to $x_\pm,$ resp. They are eigenvectors of $F$ corresponding to imaginary eigenvalues\index{eigenvalue} $\pm 2\Im(\omega)$. Since $f^2=(v+iu)I/2=\omega^2,$ $\exp(tf)$ is easily computed
\be e^{tf}= \cosh(\omega t)\,I+\frac{\sinh \omega t}{\omega}\,f,\label{app:expf}\ee
where it does not matter which of the two possible signs of $\omega$ is chosen. If $\omega=0,$ then $f$ has just one eigenvector $\xi,$ vector $x,$ corresponding to $\xi$ is real isotropic, and F annihilates 2-dimensional plane in the three dimensional hyperplane orthogonal to $x.$ Moreover, when $\omega=0,$ which happens if and only if $f^2=0,$ we get instantly
\be e^{tf}= I+tf,\ee
which can be also obtained by taking the limit of $\omega\rightarrow 0$ in \eref{app:expf}. We can now expand the functions $\cos(t\omega),\sinh(t\omega)$ of the complex argument $t\omega=t(\sigma+i\theta)$ and use  \eref{app:lmunu} to recover the results of Proposition \ref{app:propexp}  by straightforward though somewhat lengthy calculations.

\end{document}